\newcommand{\cE}{\mathcal{E}}
\newcommand{\cR}{\mathcal{R}}
\newcommand{\cG}{\mathcal{G}}
\begin{document}

\newtheorem{definition}{Definition}
\newtheorem{theorem}{Theorem}
\newtheorem{remark}{Remark}

\title{Interesting Multi-Relational Patterns}

\author{Eirini Spyropoulou, \\
Department of Engineering Mathematics,\\
University of Bristol, UK\\
Email: eirini.spyropoulou@bristol.ac.uk \and
Tijl De Bie, \\
Department of Engineering Mathematics,\\
University of Bristol, UK\\
Email: tijl.debie@bristol.ac.uk
}

\maketitle

\begin{abstract}
Mining patterns from multi-relational data is a problem attracting increasing interest within the data mining community. Traditional data mining approaches are typically developed for highly simplified types of data, such as an attribute-value table or a binary database, such that those methods are not directly applicable to multi-relational data. Nevertheless, multi-relational data is a more truthful and therefore often also a more powerful representation of reality. Mining patterns of a suitably expressive syntax directly from this representation, is thus a research problem of great importance.

In this paper we introduce a novel approach to mining patterns in multi-relational data. We propose a new syntax for multi-relational patterns as complete connected subgraphs in a representation of the database as a $K$-partite graph. We show how this pattern syntax is generally applicable to multi-relational data, while it reduces to well-known tiles \cite{tiling} when the data is a simple binary or attribute-value table. We propose RMiner, an efficient algorithm to mine such patterns, and we introduce a method for quantifying their interestingness when contrasted with prior information of the data miner. Finally, we illustrate the usefulness of our approach by discussing results on real-world and synthetic databases.
\end{abstract}

\section{Introduction}

Since the formalization of frequent itemset mining and association rule mining, the focus of data mining research has mostly been on single-table databases. However, as most information systems rely on a multi-relational representation of data, the focus has recently started to shift to multi-relational databases (MRDs). On top of the challenges faced in most pattern mining research, a key additional challenge here is the definition of insightful pattern types that properly exploit or elucidate the structure in the data.

Previous work has focused on generalizing ideas from frequent pattern mining. The most common strategy is to first take the full join of all the tables of the MRD, after which standard pattern mining methods can be applied. However, in flattening the MRD in this way important structural information is inevitably lost \cite{starassociations,koopman1,Goethals}. Approaches relying on Inductive Logic Programming type of patterns avoid this, thus capturing better the structure of the MRD \cite{warmr,farmer,koopman2}. However, all these approaches rely on transferring the notions of \emph{recurring pattern} and \emph{support} in the multi-relational setting either by
measuring the support with respect to the entries of the join table \cite{starassociations,koopman1} or with respect to just one table or entity in the database \cite{Goethals,warmr,farmer,koopman2}, which complicates the interpretation of the results. (See Sec.~\ref{relwork} for a more detailed discussion of existing work.)

On top of these conceptual problems, most existing methods for mining MRDs also suffer from usability problems: the returned set of patterns is often overwhelmingly large and redundant, or subjectively not very interesting. Fortunately, these problems have recently been addressed by the pattern mining research community, albeit in simpler settings (mostly itemsets in binary databases). This includes the definition of new objective interestingness measures with various properties (see \cite{intmeasures} for an overview), as well as the definition of general schemes to formalize subjective interestingness \cite{swap07,idn,DeB:10a,tijl}. Another related development, mostly aimed at reducing redundancies, is the focus on evaluating interestingness of pattern sets, instead of individual patterns \cite{Item06Siebes,DeZ:07}. To improve multi-relational data mining methods, some of these ideas should be transferred and adapted where needed.

Here we contribute on both these fronts: the conceptualization and search for patterns in MRDs, and the quantification of their interestingness. In particular, in Sec.~\ref{syntax} we propose a new type of pattern syntax in MRDs that captures the structural information of an MRD. It does not rely on the concept of support, thus avoiding some of the pitfalls in earlier work on this topic. We represent the MRD as a $K$-partite graph and define a pattern as a complete connected subgraph in this $K$-partite graph. We illustrate that this type of pattern is \emph{easy to interpret}, it is \emph{generally applicable} to MRDs, while in simple settings it \emph{subsumes itemsets as a special case} (or more accurately, tiles \cite{tiling}). We further propose RMiner, \emph{an efficient algorithm} to mine such patterns directly from the $K$-partite graph representing the MRD (Sec.~\ref{algorithm}). In Sec.~\ref{assessment} we show that the proposed pattern syntax lends itself well to \emph{formalizating their subjective interestingness}, subject to certain prior knowledge on the data. In a similar way as the work in \cite{tijl} has done for itemsets in binary databases, this approach guarantees the interestingness of the returned patterns in a well-defined setting. We discuss related work in Sec.~\ref{relwork}. Finally, in Sec.~\ref{experiments} we show results on real-world and synthetic MRDs, to support the above claims.

\section{Multi-relational data and patterns}\label{syntax}

We first formalize multi-relational databases as considered in this paper. In an abstract manner this formalization is reminiscent of the Entity-Relationship (ER) model as explained in \cite{FODS}. Then we show how such an MRD can be uniquely represented as a $K$-partite graph. Finally, we move on to defining the proposed pattern syntax, based on the graph-representation of the MRD.

\paragraph{Multi-relational database (MRD)} We formalize a multi-relational database as a collection of \emph{entities} $e_k^i$ that are grouped into $K$ \emph{entity types} $E_k$ ($k=1:K$). Each entity type has a \emph{domain}, denoted as $\cE_k$ for entity type $E_k$, such that $e_k^i\in\cE_k$. For the purpose of this paper we assume domains are discrete. A \emph{relationship type} $R_{kl}$ between a pair of distinct entity types $E_k$ and $E_l$ defines a \emph{relationship set} $\cR_{kl}$ which contains \emph{relationship instances} $r_{kl}=(e_k,e_l)\in\cR_{kl}$ between pairs of entities $e_k\in\cE_k$ and $e_l\in\cE_l$. Relationship types can be many-to-many, one-to-many, or one-to-one, depending on how many relations the elements of either domains can participate in.

Let us consider a toy example with `year', `movie', and `genre' as entity types (with obvious domains), and with relationship types between `year' and `movie', specifying the year of release of the movie, and between `movie' and `genre', specifying the genres of a movie.  The first of these relations is a one-to-many relationship type, while the second is a many-to-many relationship type.

\begin{remark}
[What about attributes?]
In an ER model, an entity can have attributes associated to it. Our formalism differs from this, in treating each attribute as an entity type of its own, with the set of possible attribute values as its domain. Then, associating attribute values with the entity they correspond to is done by making use of a one-to-many relationship type between the entity type of the attribute and the entity \cite{FODS}. E.g., in the toy example considered before, `year' would typically be modelled as an attribute to the `movie' entity. However, we model it as an entity, with a relationship type between `year' and `movie'.
While this approach is inadequate for data modelling purposes, it allows for a unified treatment of attributes and entities. This is desirable, as in the ER model the distinction between attributes and entities is often ambiguous, while we wish our methods to be independent of such modelling choices. Furthermore, it makes our methods more general than other methods that do distinguish entities from attributes.
\end{remark}

\paragraph{A graph representation of a MRD}
In the rest of this paper, we will make use of a graph representation of MRDs. In this representation, there is a node for each entity in the MRD, and an edge between the nodes corresponding to entities $e_k$ and  $e_l$ for each relationship instance $r_{kl}=(e_k,e_l)$. We say that nodes representing entities of the same type are of the same node type, and similarly we say that edges representing relationship instances of the same type are of the same edge type. Clearly, the resulting graph is $K$-partite, each partition in the graph containing nodes of the same node type.

Because of this symmetry between entities/relations and nodes/edges, we slightly overload notation and denote the resulting graph as $\cG=(\cE,\cR)$ where $\cE=\bigcup\cE_k$ (all nodes of all types) and $\cR=\bigcup\cR_{kl}$ (all edges of all types). We will also refer to $E_k$ as the $k$'th node type, and to $R_{kl}$ as an edge type between node types $E_k$ and $E_l$.

As an example, the graph representation of the toy MRD is shown in Fig.~\ref{dbex}. In this example there are three entity types, namely `title', `year' and `genre'. Moreover there are two relationship types, one one-to-many relationship type between `year' and `title' and one many-to-many relationship type between `title' and `genre'.

\begin{figure}[!ht] 
\centerline{\includegraphics[height=2.5cm]{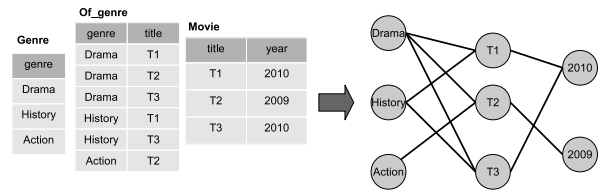}}
\caption{Database transformed to a $K$-partite graph. The entity types `genre', `title', `year' correspond to different parts in the graph and the entities of each entity type correspond to different nodes. The join table `Of\_genre' defines a many-to-many relationship between the entity types `genre' and `title' and the table `Film' defines an one-to-many relationship between entities `title' and `year'. Two entities are linked with an edge if they co-occur in a same tuple.}\center
\label{dbex}
\end{figure}


\paragraph{The pattern syntax}
In this paper, we are interested in identifying patterns that relate entities with each other, within and between different entity types. The pattern syntax we suggest for this purpose is a connected complete subgraph (CCS), and in particular a Maximal CCS (MCCS), in the $K$-partite graph representation of the MRD.

\begin{definition}[Maximal connected complete subgraph] Given a graph $\cG=(\cE,\cR)$ with $\cE=\bigcup\cE_k$ and $\cR=\bigcup\cR_{kl}$, a subgraph is defined as a graph $(\cE',\cR')$ for which $\cE'\subseteq\cE$ and $\cR'\subseteq\cR$ such that for every pair of nodes $e_k,e_l\in\cE'$ part of the subgraph, if $(e_k,e_l)\in\cR$ then $(e_k,e_l)\in\cR'$. A subgraph $(\cE',\cR')$ is connected if there exists a path between any pair of nodes from $\cE'$ along edges from $\cR'$. It is complete if for any pair of nodes $e_k,e_l\in\cE'$ of different types (i.e. with $e_k\in\cE_k$ and $e_l\in\cE_l$, $k\neq l$) between which an edge type $R_{kl}$ exists, it holds that $(e_k,e_l)\in\cR'$. A maximal connected complete subgraph is a connected complete subgraph to which no node can be added without violating connectedness or completeness.
\end{definition}
Note that a subgraph of size larger than one can be connected only if it contains nodes of at least two different types. A connected complete subgraph is a generalization of a clique to the $K$-partite graph representation of the MRD, used in this paper. 

In the example of Fig.~\ref{dbex} the set of nodes $\{$T1, T3, Drama, History, 2010$\}$ represent an MCCS pattern. This pattern provides the information that titles T1 and T3 are both produced in 2010 that are both Drama and History.

\paragraph{Special cases of MCCSs}
Conceptually, MCCSs are easy to grasp, and the empirical results will further demonstate that this pattern syntax is a sensible and intuitive one. An additional argument in support of MCCSs is that they reduce to well-known pattern syntaxes of well-studied forms of data.

Consider a market-basket database, containing two entity types: items and transactions. There is one relationship type representing the fact that an item was bought in a transaction. It is well-known that a binary item-transaction database can be represented by means of a bipartite graph \cite{zaki98}, and indeed this graph is exactly the graph representation of this rather degenerate case of a MRD. An MCCS in this bipartite graph is a maximal biclique, which corresponds to a maximal tile in this database: the pair of a closed itemset and its supporting transactions \cite{tiling}. This is depicted in Fig.~\ref{mbex}, showing a database of three items and four transactions and the corresponding bipartite graph. The set of nodes $\{$T1, T2, I1, I2$\}$ is an example of a maximal biclique in this graph.

\begin{figure}
\centering
\centerline{\includegraphics[height=2.5cm]{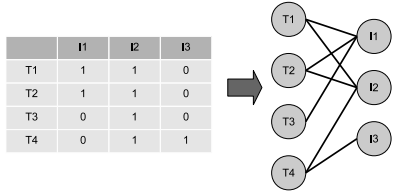}}
\caption{Transaction database as a bipartite graph. Transactions and items represent different partitions of the graph and are linked with edges according to the `1's of the binary matrix.}\center
\label{mbex}
\end{figure}
\begin{figure}
\centering
\centerline{\includegraphics[height=2.5cm]{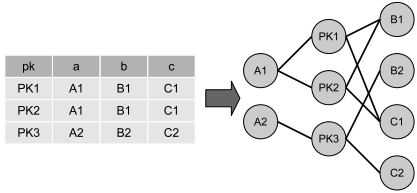}}
\caption{Attribute-value database as a k-partite graph. Attributes represent different partitions of the graph and attribute values represent the nodes. Key and non-key attribute values are linked if they co-occur in the same tuple of the attribute-value table.}\center
\label{atex}
\end{figure}

Similarly, for a single attribute-value data table the entity types in our formalization consist of the entity type that uniquely identifies the rows of the table (typically identified by a primary key attribute), along with an entity type for each of the (non-key) attributes. Hence, for an attribute-value table with $K-1$ (non-key) attributes, we would have $K$ entity types. An MCCS in this $K$-partite graph contains a set of nodes representing attribute-values and necessarily also a set of nodes corresponding to the rows of the table. Mapped back to the data table, this is equivalent to a set of attribute-values along with the supporting set of rows in the table. This is depicted in Fig.~\ref{atex} which shows an attribute-value data table with three attributes and three transactions. Here, the set of nodes $\{$A1, PK1, PK2, B1, C1$\}$ is an example of an MCCS.

\section{RMiner: an algorithm to search for all MCCSs}\label{algorithm}

In this section we outline a new algorithm called RMiner (from Relational Miner), that is able to efficiently enumerate all CCSs in a $K$-partite graph. After this, we present a variant of RMiner that can be used when only the Maximal CCSs are required.

Note that practical methods to enumerate all cliques in an ordinary graph exist (e.g. the Bron-Kerbosch algorithm \cite{bronkerbosch}). Furthermore, an algorithm for searching for $K$-partite maximal cliques was proposed in \cite{clicks}, although there edges are allowed between any pair of types. Despite clear similarities between these problem settings and the task of enumerating (M)CCSs in this paper, the differences are too large to allow the use of these algorithms to our problem without significant modifications or inefficiencies.

\paragraph{Enumerating all CCSs} In RMiner a CCS is represented using a list of the nodes $\mathbf{e}=\{e^1,e^2,\ldots,e^{|\mathbf{e}|}\}$ it contains. It enumerates CCSs by enumerating all such lists of nodes that represent a CCS. To do this efficiently, such lists are organized in a tree-structured search space, with the empty list (corresponding to the empty CCS) at the root and with a CCS represented by the node list $\mathbf{e}=\{e^1,e^2,\ldots,e^{|\mathbf{e}|-1},e^{|\mathbf{e}|}\}$ having its longest non-trivial prefix $\{e^1,e^2,\ldots,e^{|\mathbf{e}|-1}\}$ as its parent. RMiner traverses this tree in a depth-first manner, backtracking as soon as a subgraph is constructed that is no longer connected or complete. Note that this strategy ensures that each CCS is represented by a list for which it holds that each prefix is also a CCS. (For some permutations of the nodes, some prefixes may not be connected and thus would not represent a CCS.) We say such a representation has \emph{connected prefixes}.

To branch down in this search tree, we need an efficient way to identify which nodes may be added to a CCS such that the result is still a CCS. These nodes are the adjacent common neighbours to the CCS, defined as follows.

\begin{definition}[Common neighbour, adjacent common neighbour]
A node $e_k$ of type $E_k$ is a \emph{common neighbour} of a CCS represented by the list of nodes $\mathbf{e}$, iff for each node $e_l\in\mathbf{e}$ of type $E_l$ for which an edge type $R_{kl}$ exists between $E_k$ and $E_l$, it holds that $(e_k,e_l)\in\cR_{kl}$. An \emph{adjacent common neighbour} of a CCS is a common neighbour that is connected by an edge to at least one node from that CCS.
\end{definition}

This approach by itself does not rule out enumerating the same CCS more than once. Indeed, typically there are several permutations of the nodes in a CCS that have connected prefixes, such that the same CCS would be enumerated several times represented by different permutations of the node list. To avoid generating a CCS in more than one of these permutations, a lexicographical ordering is imposed over the nodes. RMiner is designed so that it only generates the lexicographically smallest permutation for each CCS, referred to as the \emph{representative permutation}. Importantly, the prefix of a representative permutation is a also representative permutation, such that the set of representative permutations forms a subtree of the larger search tree described above.

Note that limiting the search to representative permutations only, is not as easy to implement as for ordinary cliques in ordinary graphs (as done e.g. in the Bron-Kerbosch algorithm \cite{bronkerbosch}), since the nodes in a representative permutation are not necessarily sorted lexicographically themselves. Indeed, the lexicographically sorted list of nodes in a CCS may not have connected prefixes. Thus, representative permutations cannot be generated simply by adding nodes in lexicographical ordering. To specify how RMiner deals with this, we need to introduce the concept \emph{reachability}.

\begin{definition}[Reachable node type and reachable node]
A node type is said to be \emph{reachable} from a CCS iff there exists an edge type between this node type and the type of at least one of the nodes already in the CCS. We say a node is \emph{reachable} when its node type is reachable.
\end{definition}

RMiner will expand a CCS represented by $\mathbf{e}$ with an adjacent common neighbour $e$ only if $e$ is lexicographically larger than all nodes \emph{that were added since $e$ first became reachable}. This concludes the high-level description of RMiner for mining all CCSs. Correctness and efficiency of RMiner's enumeration strategy can be proven as follows.

\begin{theorem}[Correctness]
RMiner will enumerate each CCS at least once.
\end{theorem}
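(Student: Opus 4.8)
\medskip
\noindent\emph{Proof sketch.}
The plan is to show, by induction on the number of nodes of a CCS $C$, that RMiner generates one specific representation of $C$: its \emph{representative permutation} $\pi(C)$, which I take to be the lexicographically smallest among all node lists representing $C$ that have connected prefixes. Such lists exist because $C$ is connected (any order in which a traversal of $C$ discovers its nodes yields one), so $\pi(C)$ is well defined. The base case $|C|=1$ is immediate: from the root RMiner may append any single node, a singleton subgraph is trivially connected and complete, and the lexicographic condition is vacuous for the first node, so every singleton CCS is enumerated.

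For the inductive step, write $\pi(C)=(e^1,\dots,e^m)$. It suffices to establish two properties. \textbf{(i)}~Every prefix $(e^1,\dots,e^i)$ is itself the representative permutation of the CCS it induces. \textbf{(ii)}~For each $j$, the node $e^j$ is lexicographically larger than every node that was already in the list at the moment $e^j$ first became reachable. Given these, RMiner reaches $\pi(C)$: by \textbf{(i)} the list $(e^1,\dots,e^{m-1})$ is the representative permutation of an $(m-1)$-node CCS, which RMiner generates by the induction hypothesis; the node $e^m$ is an adjacent common neighbour of that CCS (appending it keeps the subgraph connected, and it is a common neighbour because $C$ is complete); and by \textbf{(ii)} the lexicographic pruning rule does not forbid appending $e^m$. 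Hence $C$ is enumerated.

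Property \textbf{(i)} follows from a greedy description of $\pi(C)$: since $C$ is connected, any connected-prefix list built from a proper subset of the nodes of $C$ can always be extended by one further node while keeping every prefix connected, so the lexicographically smallest connected-prefix list is obtained by repeatedly appending the smallest not-yet-used node that is adjacent to the current prefix; a standard exchange argument then shows that each prefix of this greedy list is the greedy list --- hence the representative permutation --- of its own node set. Property \textbf{(ii)} is where completeness of $C$ does the work: in a complete subgraph, whenever a node type shares an edge type with the type of some node already present, those two nodes must be joined by an actual edge, so \emph{for nodes of $C$}, being ``reachable from the current prefix'' and being ``adjacent to the current prefix'' coincide. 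Hence, from the moment $e^j$ first becomes reachable it is an adjacent common neighbour of every later prefix as well, so at each of those steps $e^j$ is one of the nodes the greedy rule considers; since the greedy rule appended the intervening nodes in preference to $e^j$, each of them is lexicographically smaller than $e^j$, which is exactly \textbf{(ii)}.

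The main obstacle is \textbf{(ii)}, and in particular getting right which nodes count as ``added since $e^j$ first became reachable''. The point that resolves it --- and that is easy to miss --- is the collapse of reachability to adjacency inside a complete subgraph: without it one would have to bound the nodes inserted while $e^j$ is reachable by edge \emph{type} but not yet joined by an edge, and the greedy argument would give no such bound. The remaining ingredients (well-definedness of $\pi(C)$, the exchange argument for \textbf{(i)}, and the handling of the root) are routine.
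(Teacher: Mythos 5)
Your proof is correct and takes essentially the same route as the paper's: the paper's argument is exactly the greedy construction you describe (start from the lexicographically smallest node of the CCS, then repeatedly append the smallest reachable node), with connectivity guaranteeing every node is eventually added. Your write-up additionally makes explicit two points the paper leaves implicit --- that reachability coincides with adjacency inside a complete subgraph, and that the greedy order automatically satisfies the algorithm's lexicographic admission rule (your property \textbf{(ii)}) --- which is a useful tightening rather than a different approach.
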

\begin{proof}
The algorithm would reconstruct any given CCS by first adding the lexicographically smallest node of the CCS, after which the lexicographically smallest \emph{reachable} node from the CCS is added, recursively until all of its nodes are added. That all nodes can be added follows from the fact that a CCS is connected by definition, such that each node type will become reachable at some point in the algorithm.
\end{proof}

\begin{theorem}[Efficiency]
RMiner will enumerate each CCS no more than once.
\end{theorem}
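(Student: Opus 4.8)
The plan is to show that the enumeration strategy described for RMiner produces each CCS in exactly one permutation, namely its representative (lexicographically smallest) permutation, and that conversely every permutation RMiner actually generates is a representative permutation. Combined with the Correctness theorem, which guarantees each CCS is produced at least once, this yields that each CCS is produced exactly once.

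First I would pin down the characterization of the representative permutation in terms of the reachability rule. Recall that RMiner expands a partial CCS $\mathbf{e}$ by an adjacent common neighbour $e$ only if $e$ is lexicographically larger than every node added \emph{since $e$'s node type first became reachable}. So I would argue that a permutation $\{e^1,\ldots,e^m\}$ of the nodes of a CCS satisfies this rule at every step if and only if it is the lexicographically smallest permutation among all permutations with connected prefixes. The forward direction is essentially the definition of what RMiner generates; the key content is the converse: given any CCS, the permutation obtained by the greedy rule in the Correctness proof (add the smallest node, then repeatedly add the smallest reachable node) is exactly the one RMiner follows, and I would show it is lexicographically minimal over all connected-prefix permutations. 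The argument is an exchange/adjacent-transposition argument: if some other connected-prefix permutation were lexicographically smaller, look at the first position where it differs; the node it places there must have been reachable at that point in the greedy permutation (since it is reachable in the other permutation using only earlier nodes, which are a subset), contradicting the greedy choice of the smallest reachable node — unless that node had become reachable even earlier in the greedy run, in which case the reachability rule would itself have forbidden RMiner from the larger choice. The subtle point, which is why the paper bothers to introduce reachability rather than just ``add nodes in lexicographic order,'' is that a node can be reachable before its predecessor in lexicographic order is added, so the representative permutation need not be sorted; the rule must therefore be stated relative to the moment a type became reachable, and the proof must track that moment carefully.

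Next I would make the tree-structure observation explicit: the set of representative permutations is prefix-closed, i.e. every prefix of a representative permutation is itself the representative permutation of the sub-CCS it spans. This is what lets the depth-first search enumerate them without redundancy — RMiner never revisits a node list, since its search tree has distinct lists at distinct nodes, and by the reachability-pruning rule it only ever descends along representative permutations. So the remaining obligation is: no two distinct CCSs share a representative permutation (immediate, since a permutation determines its underlying node set), and RMiner generates \emph{only} representative permutations (by construction of the expansion rule) — hence it generates at most one permutation per CCS, i.e. enumerates each CCS no more than once.

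I expect the main obstacle to be the converse direction of the characterization, namely proving that the reachability-constrained greedy permutation is genuinely the unique lexicographically smallest connected-prefix permutation, and in particular handling the interaction between ``lexicographically smallest'' and ``connected prefix.'' One must rule out the scenario where the truly smallest choice at some step would break connectedness of the prefix, forcing a larger node, yet the reachability bookkeeping still singles out a canonical permutation — this is exactly where the definition of reachable node type does the work, because a node's type becoming reachable is precisely the condition under which that node could legally appear next, and the rule ``larger than everything added since that moment'' is the correct canonicalization. I would structure this as a lemma (representative permutation $=$ reachability-greedy permutation $=$ unique lex-min connected-prefix permutation) proved by induction on permutation length with an exchange argument at the inductive step, and then the theorem follows in one line.
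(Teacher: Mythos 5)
Your proposal is sound and its crux is the same as the paper's, but you reach it by a longer route. The paper's entire proof is a single first-divergence contradiction between two \emph{hypothetically generated} permutations of the same CCS: at the first position where they differ, both candidate nodes are reachable from the identical common prefix; taking $e'<e''$, the permutation that places $e''$ first can never legally add $e'$ afterwards, because $e''>e'$ was added after $e'$ became reachable, violating the expansion rule. That is all that is needed for ``at most once.'' You instead prove a stronger characterization --- that the unique rule-satisfying permutation is the lexicographically smallest connected-prefix permutation, together with prefix-closure of representative permutations --- and derive uniqueness as a corollary. The exchange argument at the heart of your characterization lemma is exactly the paper's divergence argument, so your plan works, but note two things. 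First, the pieces you add (lex-minimality of the surviving permutation, prefix-closure) are not needed for this theorem; what the theorem requires is only that \emph{no two} rule-satisfying permutations of the same CCS exist, and your exchange step, applied to two rule-satisfying permutations rather than to the greedy one versus an arbitrary connected-prefix permutation, already gives that directly. Second, your phrase ``RMiner generates only representative permutations by construction'' is doing hidden work --- it is precisely the uniqueness claim, not a construction-level triviality --- though you do flag the converse direction as the real obstacle. One simplification available to you: within a CCS, completeness forces every cross-type edge whose edge type exists to be present, so for member nodes ``reachable'' and ``adjacent'' coincide; this removes the case split you worry about where the lexicographically smallest choice would break connectedness.
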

\begin{proof}
Assume the contrary, that the algorithm would generate at least two permutations representing a given CCS. Let us say that the first nodes in the permutations that differ are $e'$ and $e''$ respectively---i.e. all nodes preceding $e'$ in the first permutation and preceding $e''$ in the second permutation are equal. This implies that they are both reachable from the CCS comprised of the identical set of previously added nodes. Assume without loss of generality that $e'<e''$ in the lexicographical ordering. Then, in the second permutation, $e'$ was added after $e''$, which is not possible since $e''$ was reachable at the same time as $e'$ and can thus not be added any more after $e'$, showing contradiction.
\end{proof}

\paragraph{Restricting the search to MCCSs}
As the number of CCSs can still be prohibitively large, in this paper we are interested only in Maximal CCSs (MCCSs). Enumerating only the MCCSs can be done more efficiently than enumerating all CCSs, by pruning parts of the search tree that do not lead to Maximal CCSs. 

Our approach is based on the following observation (which is similar to the observation that allows pruning non-maximal partial solutions in the setting of frequent itemset mining \cite{charm}). Assume nodes $e'$ and $e''$ with $e'<e''$ are both extensions of a CCS. Then, if the common neighbours of the CCS extended by $e''$ are included in the set of common neighbours of the CCS extended by $e'$, any MCCS including that CCS as well as both $e'$ and $e''$ would be discovered when extending the CCS with $e'$. Extending it directly with $e''$ can only lead non-maximal CCSs since $e'$ cannot be included after including $e''$, so we can prune the branch of the search tree extending the CCS with $e''$.

Clearly, MCCSs can only be found whenever a recursion reaches an end point. That said, some of these end points may in fact not be maximal. Checking maximality can be done by verifying if the set of \emph{adjacent} common neighbours is equal to the CCS itself.

\begin{algorithm}
\caption{Simplified pseudocode for mining all MCCSs from a $K$-partite graph.}
\label{RMiner}
\textbf{RMiner}(Graph $\cG=(\cE,\cR)$)
\begin{algorithmic}[1]
\FOR{$e \in \cE$ in lexicographical order }
\STATE Expand$(\cG,e)$
\ENDFOR
\end{algorithmic}
\textbf{Expand}(Graph $\cG=(\cE,\cR)$, Ordered node list $\mathbf{e}\in\cE^n$)
\begin{algorithmic}[1]
\STATE $\mathcal{N}=\emptyset$
\FOR{$e \in \cE$ in lexicographical order }
\STATE $k=\max\{l:e \mbox{ is not reachable from the prefix } \mathbf{e}_{1:l}\}$
\IF{$\forall e'\in\mathbf{e}_{k+1:n}:e'<e$}
\STATE $\mathbf{n}=$ the set of common neighbours of $\mathbf{e}e$
\IF{$\nexists\mathbf{n}'\in\mathcal{N}:\mathbf{n}\subseteq \mathbf{n}'$}
\STATE $\mathcal{N} = \mathcal{N}\cup\{\mathbf{n}\}$
\STATE Expand($\cG$,$\mathbf{e}e$)
\ENDIF
\ENDIF
\ENDFOR
\IF{$\mathcal{N}\neq\emptyset$ \&\& IsMaximal($\cG$,$\mathbf{e}$)}
\STATE Send $\mathbf{e}$ to the output
\ENDIF
\end{algorithmic}
\textbf{IsMaximal}(Graph $\cG=(\cE,\cR)$, Ordered node list $\mathbf{e}\in\cE^n$)
\begin{algorithmic}[1]
\STATE $\mathbf{a} =$ set of adjacent common neighbours of $\mathbf{e}$ in $\cG$
\IF{$\mathbf{a}==\mathbf{e}$}
\STATE return true
\ELSE
\STATE return false
\ENDIF
\end{algorithmic}
\end{algorithm}

\paragraph{Implementation details}
Simplified pseudocode of our algorithm is given in Algorithm~\ref{RMiner}. For space and transparancy reasons, the pseudocode hides the following implementation details that allow for additional efficiency.

Each intermediate CCS, represented by a node list $\mathbf{e}$ in the pseudocode, is actually represented by RMiner in a manner similar to the Itemset-Tidset pairs in \cite{charm}, with some additional intricacies due the fact that items and transactions coincide when searching for MCCSs, and due to the graph being $K$-partite. More specifically, for each CCS, three pieces of information are kept in memory: the set of nodes $\mathbf{e}$ already in the CCS, the set of common neighbours of $\mathbf{e}$ (remembering also which are adjacent), and the subset of the adjacent common neighbours that are lexicographically larger than the last node added to $\mathbf{e}$ since their node type became reachable. These pieces of information can be stored efficiently, upon extension of a CCS they can be updated by means of simple set operations, and they facilitate pruning and maximality checking. Various further optimizations can be made (such as the use of diffsets), but they will be part of our future work. Another implementation detail is that the lexicographical order on the nodes is created by assuming an overriding lexicographical order over the node types, along with a lexicographical order over the nodes within each type.

\section{Assessment of patterns}\label{assessment}

The number of MCCSs is usually very large, which is a recurring problem in Pattern Mining research. Typically this problem is addressed by selecting or ranking patterns using objective or subjective interestingness measures \cite{intmeasures}. Here, we choose to define interestingness with respect to a specific type of prior information, by defining an interestingness measure which deems an MCCS to be more interesting if it is more unexpected given this prior information. More specifically, we consider as prior information the degree of each node in the different relationship types of the $K$-partite graph representation of the MRD. An MCCS is more interesting if it is harder to explain based on this prior information alone. For example in the setting of a movie MRD, an MCCS containing directors that have directed many movies would be deemed less interesting by our approach than an equally large MCCS containing less prolific directors, as the latter MCCS cannot as easily be attributed to randomness and is more unexpected.

To introduce the interestingness measure, we can closely follow the work presented in \cite{tijl,DeB:10a}, where it is argued that subjective interestingness can be formalized by contrasting patterns with a background model that is the Maximum Entropy model subject to the prior information. Thus we only need to detail the Maximum Entropy model for our case (see Sec.~\ref{maxent}), and the approach to contrast MCCS patterns with this model to arrive at an interestingness measure (see Sec.~\ref{contrast}).

\subsection{Maximum-Entropy model of the user's prior information}\label{maxent}

We consider as prior information 
the degree of the nodes for every relationship type in the $K$-partite graph representation of the MRD. Following \cite{tijl}, we formalize this prior information in a probability distribution $P$, 
fitting the Maximum Entropy distribution on the $K$-partite graph of the \emph{MRD}, with constraints on the expected degree of the nodes for every relationship type being equal to their actual degree. This is the distribution of maximal uncertainty about the data with only the prior information as bias.

The nature of the constraints is such that they are defined for every relationship type $R_{kl}$ of the \emph{MRD} without imposing any dependence between the relationship types.
Therefore, the Maximum Entropy distribution for the MRD subject to these constraints will be a product of independent Maximum Entropy distributions, one for each relationship type. Indeed, if there were dependencies between the relationship types, the Entropy of the joint distribution would be reduced by their mutual information \cite{CoverThomas}, and would therefore not be maximal. Representing each relationship type as a binary database $D_{kl}$ with $D_{kl}(i,j)=1$ when $(e_k^i,e_l^j)\in \cR_{kl}$, the Maximum Entropy distribution for the MRD is thus:
\begin{eqnarray*}
P(\cup_{kl}D_{kl})&=&\prod_{kl} P_{kl}(D_{kl}).
\end{eqnarray*} 

Maximizing the Entropy for every relationship type $R_{kl}$ of the \emph{MRD} represented by a binary matrix $D_{kl}$ subject to constraints on the expected degrees of the nodes is equivalent to maximising the Entropy of a distribution for a binary database subject to constraints on the expected row and column sums. The solution of this problem was shown to be a product of independent Bernoulli distributions, given by \cite{tijl}:


\begin{equation*}
\begin{array}{rcl}
P_{kl}(D_{kl}) &=& \prod\limits_{i,j}P_{kl}^{ij}(D_{kl}(i,j)) \mbox{ with } P_{kl}^{ij}(D_{kl}(i,j)),\\
               &=&\dfrac{\exp\left(D_{kl}(i,j)(-\lambda^i_{kl} -\mu^j_{kl})\right)}{1+\exp(-\lambda^i_{kl} -\mu^j_{kl})},
\end{array}
\end{equation*}
where $\lambda^i_{kl}$, $\mu^j_{kl}$ are parameters that can be computed efficiently.

\subsection{Contrasting MCCSs with the Maximum Entropy model}\label{contrast}

An interesting pattern conveys as much information as possible when contrasted with the user's prior information, as concisely as possible. Following earlier work \cite{tijl}, we can formalize this idea by quantifying the interestingness of an MCCS pattern $\gamma = (\cE',\cR')$ as the ratio of the self information of the MCCS and its description length:
\begin{eqnarray*}
\mbox{Interestingness}(\gamma)=\frac{\mbox{SelfInformation}(\gamma)}{\mbox{DescriptionLength}(\gamma)}.
\end{eqnarray*}

Here, the self information of an MCCS is defined given the probability of its edges under the Maximum Entropy model, as:

\begin{eqnarray*}
\mbox{SelfInformation}(\gamma)= - \sum_{r_{kl}^{ij}\in \cR'}\log(P^{ij}_{kl}(1)).
\end{eqnarray*}

An MCCS is described most naturally by the set of nodes it contains. More specifically, we choose to describe MCCS patterns by specifying for each node whether it does or does not belong to the pattern. To specify that a node belongs to an MCCS, we will use $-\log(p)$ bits, and to specify it does not belong to the MCCS we will use $-\log(1-p)$ bits, where $p$ is a probability parameter. Such a code satisfies Kraft's inequality exactly, and is thus optimal and asymptotically achievable \cite{CoverThomas}. Using this approach, the description length of an MCCS pattern $\gamma = (\cE',\cR')$ with $n=|\cE'|$ nodes and given that the graph of the MRD has $N=|\cE|$ nodes is given by:

\begin{equation*}
\begin{array}{rcl}
\mbox{DescriptionLength}(\gamma)&=&-\sum\limits_{i\in \gamma} \log(1-p) - \sum\limits_{i\not\in \gamma}\log(p),\\
&=& N\log\left(\frac{1-p}{p}\right) + n\log\left(\frac{1}{1-p}\right).
\end{array}
\end{equation*}

In \cite{tijl} it was suggested to set $p$ by default to the density of the database, an approach we adopted in our empirical results as well. However, the parameter can be tuned so as to bias the search more toward larger in number of nodes MCCSs (larger $p$) or toward smaller in number of nodes MCCSs (smaller
$p$), if desired.

\section{Related Work}\label{relwork}
Most previous work on relational pattern mining can be categorised into methods that generalize ideas from frequent itemset mining to the relational setting and methods that are based on Inductive Logic Programming (ILP). In this section we discuss the differences between these approaches and our approach as well as other works that do not fall into these two categories.

Well known ideas and algorithms from frequent itemset mining can be used for MRDs unaltered if applied on the join of all tables. The syntax of this type of patterns is essentially that of itemsets, with items in this case being attribute values and transactions being the tuples of the join table~\cite{starassociations,Goethals,koopman1}. The characteristic of this pattern syntax is that a tuple always contains one attribute value per attribute and as a result it is impossible to have two values of the same attribute in the same pattern. An itemset of this type for instance would not be able to capture the fact that a director can be related to many films. This is something that an MCCS pattern naturally captures. However, itemsets on the join table can still capture co-occurrences of attribute values that belong to different attributes.

On the other hand, the support, measured as the ratio of the tuples of the join table that contain an itemset, does not have a clear meaning as attribute values are replicated due to the join operation. A different approach is taken by Smurfig~\cite{Goethals} where the support is measured with respect to every table, as the relative number of keys that the items correspond to.

Warmr~\cite{warmr} and Farmer~\cite{farmer} are methods based on ILP. The patterns have the form of logic rules which can be regarded as local models of the database. The goal of these methods is to  mine for the most frequent rules. The support is defined as the relative number of key values of one target table that satisfy the rule. Therefore the more general the rule the higher its support will be. This type of pattern syntax is very expressive and can capture well the relational structure. However, the objective of these methods (frequent rules about the data) is different than ours (interesting patterns of co-occurring attributes). Finally the interestingness measure we propose in Sec.~\ref{assessment} can not be applied on Warmr and Farmer patterns and evaluating the interestingness of this kind of patterns is a challenge. 

Warmr, Farmer, and Smurfig are all based on the notion of a recurring pattern, and they directly depend on a support notion. Measuring the support with respect to one or a set of target tables, makes the results difficult to interpret and therefore introduces usability issues. The potential user will have to understand what exactly it means for a recurring pattern to be frequent with respect to a certain target table. Additionally, these techniques are likely to suffer from the same problems as other frequent pattern mining techniques, in particular the fact that support is usually only weakly related to interestingness. 

RDB-Krimp~\cite{koopman2} is a method for mining relational databases which is related to ours in that it also uses information theoretic ideas for the assessment of patterns. It uses the pattern syntax of Farmer~\cite{farmer} but considers just patterns of depth two (patterns of a target table and all the tables related to it with a foreign key). The most frequent patterns of this kind are mined for every table of the database as a target table and then RDB-Krimp finds the most characteristic patterns among them using the MDL principle. The focus of this method is on the total description length of the database joint with the patterns, and patterns are deemed more interesting if they are better at compressing this description length. We instead deem patterns more interesting if they describe surprising aspects of the database in a concise way, which we argue makes our results more relevant to an end-user. Finally RDB-Krimp relies on heuristic search to find the optimal set of patterns that best compress the database which is not the case for our method that searches exhaustively.

An approach for assessing the statistical significance of relational (SQL) queries based on randomisations of different tables is proposed in \cite{Ojala}. Although this approach was not intended to propose a method to mine such patterns it provides an insight towards making relational patterns useful to the user.

Finally one could see our work being connected to frequent sub-graph mining~\cite{gspan,frequentSubraphs} however besides being based on frequency, these methods are aimed at databases of many graphs rather than one connected graph.

\section{Empirical Results}\label{experiments}

To illustrate the kind of patterns retrieved by RMiner, we show and discuss empirical results on real world data. We additionally provide a comparison of our method with other methods in three different levels, namely qualitative comparison of the results, objective comparison of the rank of artificially embedded patterns in data and computational comparison. We compare with two representative methods of the previous approaches Smurfig~\cite{Goethals} and Farmer~\cite{farmer}. We chose to compare with Farmer rather than Warmr~\cite{warmr} because its pattern language is closer to that of MCCSs. Also note that RDB-Krimp was not publicly available.

\subsection{Data}
We performed experiments on three datasets of different size and complexity taken from the IMDB database (\emph{imdb-3ent-1year}, \emph{imdb-4ent-1year}, and \emph{imdb-4ent-3years}, see below for details)\footnote{See \texttt{http://www.imdb.com/}. Please note that there are some inconsistencies between the version of IMDB that we downloaded and the on-line version.} as well as on the Student database of the Computer Science department of the University of Antwerp \cite{Goethals} (called \emph{studentdb} in this paper). The purpose of the experiments in this section is to demonstrate computational feasibility, to illustrate the kind of patterns we get when mining real MRDs, and to show the usefulness of the interestingness measure when used to rank such patterns.

The \emph{imdb-3ent-1year} dataset contains directors and genres related to movies produced in 2010 and the \emph{imdb-4ent-1year} additionally contains keywords. The \emph{imdb-4ent-3years} dataset contains directors, genres and keywords, related to movies produced between 2008 and 2010. The Entity Relationship diagram of the IMDB derived databases is shown in Fig.~\ref{imdber}. The \emph{studentdb} dataset contains students, related to professors and courses. The Entity Relationship diagram of this database is shown in Fig.~\ref{smurfer}. General statistics about all the datasets are shown in Table~\ref{exp1}. 

\begin{table*}[!ht]
\label{datasets}
\begin{center}
{
\caption{Database details and computation times of RMiner.}\label{exp1}
\begin{tabular}{c c c c c c c}
\toprule
&Non-key Attributes & Tuples & Nodes& Edges & Patterns &Time(sec)\\
&per Table & per Table & & & & \\
&(excluding the join tables) & (excluding the join tables) & & & & \\
\midrule
imdb-3ent-1year& (1, 1, 1) & (15702, 14400, 28) & 30130& 48976& 14202& 9 \\
imdb-4ent-1year& (1, 1, 1, 1) & (15702, 14400, 28, 10878) & 31009& 80981& 4049& 379\\
imdb-4ent-3years& (1, 1, 1, 1) & (59656, 49348, 28, 32762) & 141796& 387577& 22393& 8084\\
studentdb& (3,1,1) & (154, 40, 174) & 401 & 3558 & 155 & 2\\
\bottomrule
\end{tabular}
}
\end{center}
\end{table*}

\begin{figure}
\centering
\centerline{\includegraphics[height=2cm]{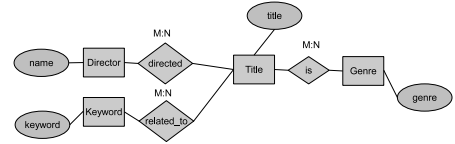}}
\caption{Entity-Relationship diagram of the \emph{imdb-4ent-1year} and \emph{imdb-4ent-3years} databases (\emph{keys} are omitted for clarity).}\label{imdber}\center
\end{figure}
\begin{figure}
\centering
\centerline{\includegraphics[height=2.5cm]{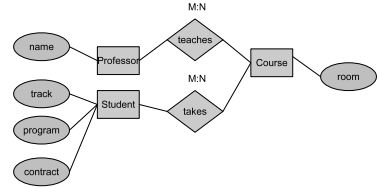}}
\caption{Entity-Relationship diagram of the \emph{studentdb} database (\emph{keys} are omitted for clarity).}\label{smurfer}\center
\end{figure}

\subsection{Results using RMiner}
We show results of RMiner on the \emph{imdb-3ent-1year} and \emph{studentdb} so that the resulting MCCS patterns are smaller and can be shown in this paper. However, we refer the reader to the website of this paper\footnote{https://sites.google.com/site/rminer2011/} for a full list of results from every dataset.

For all experiments we searched only for MCCSs that contain at least one node of each entity type because we wanted to show patterns that are different from tiles and thus truly relational. However we emphasize that our method works even without this constraint. Table~\ref{exp1} summarizes the output sizes and the computation times of the mining step.

\begin{figure*}[!ht] 
\centerline{\includegraphics[height=9cm]{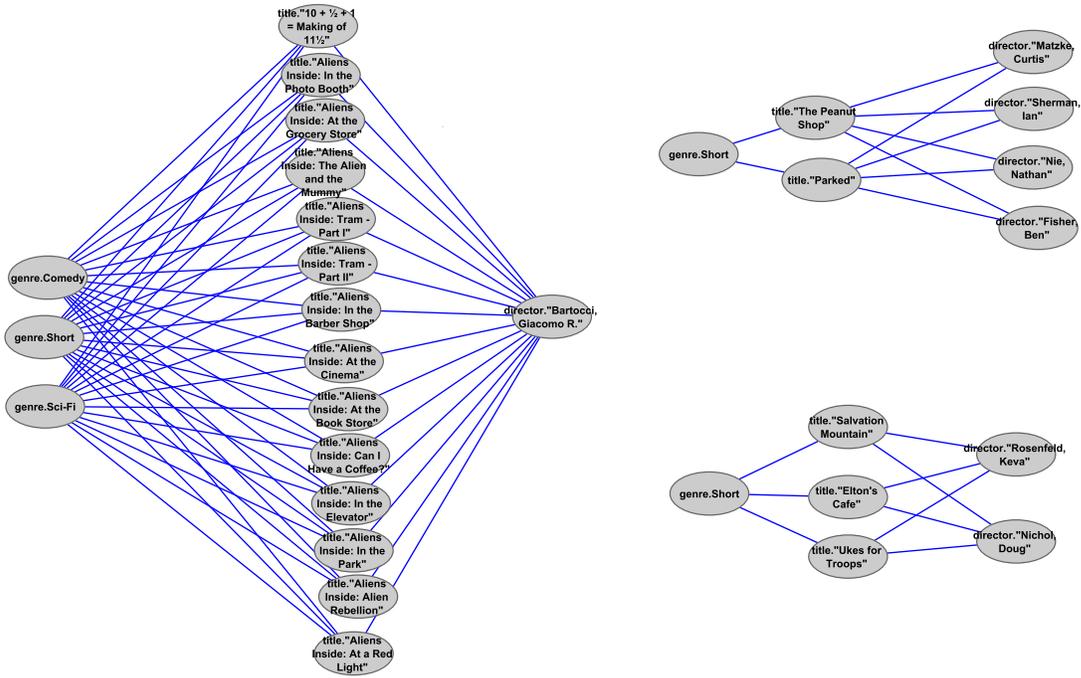}}
\caption{The three most interesting MCCS patterns in \emph{imdb-3ent-1year}. The leftmost MCCS is the first, the right top is the second and the right bottom MCCS is the third MCCS pattern.}\center
\label{results}
\end{figure*}

\paragraph{IMDB Database}
Figure~\ref{results} shows the top three patterns from the \emph{imdb-3ent-1year} database, ranked based on the interestingness measure defined in Section~\ref{assessment}. We expect patterns that convey as much information as possible as concisely as possible to be high in the list. These will be compact MCCS patterns containing many edges that are unlikely under the Maximum Entropy model.

The top-ranked pattern (leftmost in Fig.~\ref{results}) informs us about a director who directed fourteen movie titles that all of them are of genre Sci-Fi, Comedy and Short. More technically this pattern contains many edges and many of them are very unlikely edges under the model of prior information of the user, given that this director has just directed only these fourteen titles. 
The second pattern (top right in Fig.\ref{results}) shows that four directors directed two titles of the genre Short. The third pattern tells us about two directors that directed three films which are of the same genre. The directors included in second ranked pattern have directed only the titles of this pattern. The same holds for the third ranked pattern except for one director who has directed two more titles not included in this pattern. Hence the links between the directors and the titles in both these patterns are very unlikely, such that the information content of the MCCS is high. While explaining a relatively large number of unlikely edges, the number of nodes in the MCCS and thus its description length is relatively small. This explains the high interestingness.

\begin{figure*}[!ht] 
\centerline{\includegraphics[height=4cm]{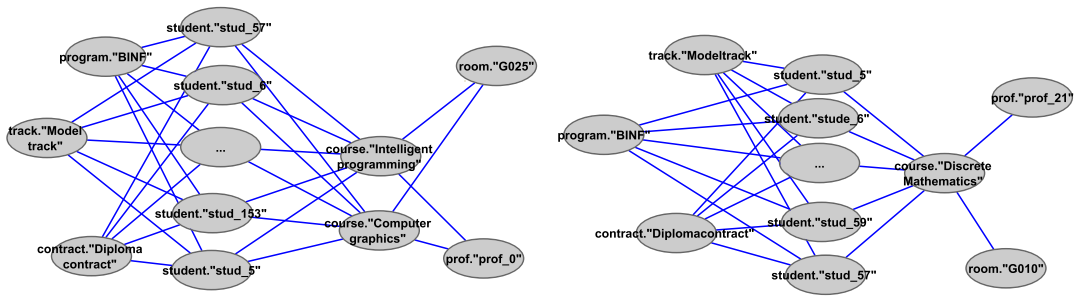}}
\caption{The first (left) and the third (right) most interesting MCCS patterns in the \emph{studentdb} dataset. Note that the number of student nodes is too large to show here, so we collapsed them onto one node labelled with an ellipsis.}\center
\label{results2}
\end{figure*}

\paragraph{Student Database dataset}
The top-ranked MCCSs on the \emph{studentdb} database are shown in Fig.~\ref{results2}. Since the first two patterns were structurally similar (although they convey non-redundant information), Fig.~\ref{results2} shows only the first and the third most interesting patterns but we discuss all the first three of them. 

The top two patterns convey information about a set of students, the program, contract, the track they are following, two courses they attend, the professor teaching these courses and the lecture room they are taught in. 
The difference between these two patterns is that the first contains a set of 67 students who follow the "Model" track while the second contains a set of 46 students who follow the "Individualised" track. 
Roughly speaking, the second pattern ranks lower than the first as it contains 126 fewer edges. Finally the third pattern (right in Fig.~\ref{results2}) conveys information about a set of 67 students, their program, contract, and track, as well as one course, the professor teaching it and the lecture room. The third pattern is less interesting than the first as it contains just 1 node less while it explains 67 fewer edges and contains is more common course room.

\subsection{Qualitative comparison} \label{qualitative} 
Here we qualitatively compare with the results of Smurfig~\cite{Goethals} and Farmer~\cite{farmer} on the \emph{imdb-3ent-1year} dataset.
\begin{figure}
\centering
\centerline{\includegraphics[height=0.7cm]{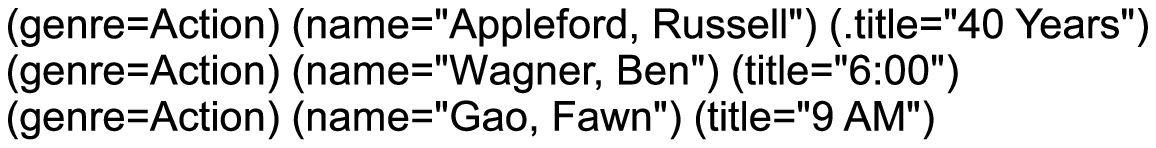}}
\caption{Three of the most frequent patterns of Smurfig on \emph{imdb-3ent-1years} database.}\label{smurfig}\center
\end{figure}

\begin{figure}
\centering
\centerline{\includegraphics[height=0.7cm]{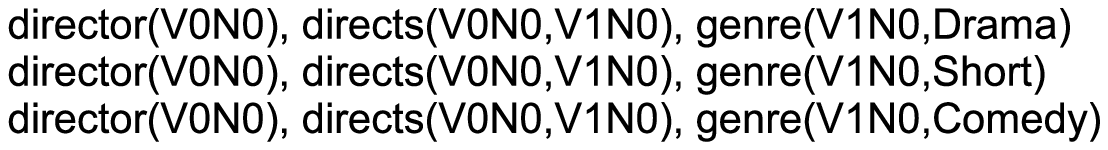}}
\caption{Top three most frequent patterns of Farmer on \emph{imdb-3ent-1years} database.}\label{farmer}\center
\end{figure}

\paragraph{Smurfig patterns}
We ran Smurfig with a support threshold of 0.001 to be as inclusive as possible. To compare with the patterns of RMiner we selected the ones that contain items from all the three attributes. As pointed out in Sec.~\ref{relwork}, each of these patterns can contain only one attribute value per attribute. Because of the nature of the \emph{imdb-3ent-1year} dataset each of them has absolute support of 1. Figure~\ref{smurfig} shows three of these patterns. Thus, Smurfig is clearly not suited to find relations in relational data of this kind.

\paragraph{Farmer patterns}
We ran Farmer with an absolute support threshold of 1. The pattern syntax we used had the following form: $director(X)$, $directs(X,Y)$, $genre(Y,g_1)\ldots genre(Y,g_n)$ and the key of the search is the atom $director(X)$.
Figure~\ref{farmer} shows the top three most frequent of these patterns that contain all three predicates. None of these patterns contain more than one genre constants, which is to be expected as the most frequent rules are bound to be the more general rules. Note that if we found the directors and titles that satisfy these rules, these patterns would correspond to CCSs. The difference between Farmer patterns and CCSs is analogous to the difference between itemsets and tiles. Farmer patterns corresponding to MCCSs are expected to be less frequent as they are more specific.

\subsection{Objective Comparison}
We investigated how different methods detect artificially embedded MCCS patterns of different sizes in the \emph{imdb-3ent-1year} data. More specifically, we investigated how highly the embedded MCCS (or a larger MCCS containing it) is ranked by our method using the interestingness measure we propose. To compare with Farmer we checked the rank of the most frequent rule corresponding to this MCCS and, allowing Farmer an advantage due to the different pattern syntax, also the rank of any CCS containing a small \emph{subset} of the embedded predicates.

To artificially embed a pattern, we added $k$ genres, $k$ directors, and $k$
titles to the database, in such a way that each of these $k$ genres and directors are connected to each of the $k$ titles, forming a CCS. As this by itself would create an unrealistic disjoint part of the database, we additionally added random links preserving the overall connectivity and database statistics. E.g., we randomly added links between the existing genres and the newly added titles so as to ensure that, in expectation, the total fraction of titles each of the existing genres
is linked with stays the same. 
This is done also between the existing titles and the newly added genres, and similarly for the directors and titles.

Table~\ref{rank} shows the rank of the embedded MCCS pattern for increasing
$k$. RMiner ranks the embedded pattern higher as the number of nodes per entity type increases and ranks it first when it contains more than just three nodes, showing that RMiner ranks high even relatively small patterns known to be present in the database. 

For Farmer we used the same pattern syntax as in Sec.~\ref{qualitative}. Table~\ref{rank} shows the rank of the highest ranked rule including all \emph{genre} predicates in the embedded MCCS, as well as corresponding to a CCS containing a subset of just two or more of the embedded \emph{genre} predicates. Unsurprisingly, Farmer ranks the CCS patterns more highly than the more specific and thus less frequent MCCS patterns. However, even the CCS patterns are ranked much lower than using RMiner.
\begin{table}[!ht]
\begin{center}
{
\caption{Rank of artificially embedded MCCS pattern in \emph{imdb-3ent-1year} dataset with increasing number of nodes $k$ per entity type.}\label{rank}
\begin{tabular}{c c c c c}
\toprule
$k$ & 2 & 3 & 4 & 6\\
RMiner Rank & 103 & 6 & 1 & 1\\
Farmer Rank (MCCS) & 121 & 502 & 1464 & 2141\\
Farmer Rank (CCS) & 121 & 109 & 125& 147\\
\bottomrule
\end{tabular}
}
\end{center}
\end{table}
\begin{table}
\begin{center}
{
\caption{Time in \lowercase{sec} of the mining step and number of patterns(\#\lowercase{p}) with increasing percentage of `title' nodes(\lowercase{\% n}).}\label{times}
\begin{tabular}{c c c c c c c}
\toprule
&\% n& 20 & 40 & 60 & 80& 100\\
\midrule
RMiner & \#p & 2848 & 5643 & 8575 & 11360& 14202\\
&time &0.6 & 1.9 & 4.1 & 6.1 & 9.1\\
\midrule
Smurfig & \#p & 357954  & 706344 & 1080203 & 1420896 & 1770200\\
&time & 57 & 220 & 561 & 952 & 1451\\
\midrule
Farmer & \#p & 1711 & 2048 & 3424 & 3860& 4545\\
&time &0.02 & 0.07 & 0.13 & 0.14 & 0.19\\
\bottomrule
\end{tabular}
}
\end{center}
\end{table}

\subsection{Scalability comparison}

We did a scalability analysis by running RMiner, Smurfig and Farmer on subsets of \emph{imdb-3ent-1year}, randomly selecting a varying percentage of movies along with the genres and directors connected to these. The results of RMiner show that the number of MCCSs and computation times scale roughly linearly with the database size (see Table~\ref{times}). 
Smurfig is slower than RMiner by a factor of at least 100. Farmer is faster by a factor of 30-40. However, we believe this gap can be shrunk significantly by applying the additional pruning techniques and the use of diffsets, as discussed in Sec.~\ref{algorithm}. 

\section{Conclusion}\label{conclusions}

We have introduced a new syntax of multi-relational patterns in MRDs, and an algorithm to mine them efficiently. Our approach relies on a representation of the MRD as a graph, and mines patterns that correspond to complete connected subgraphs in this graph. This pattern syntax generalizes the notion of a tile in a simple binary database, and is easy to interpret also in more complex settings. Note that while we have written the paper with MRDs in mind, our approach is directly applicable also to RDF data.

An important advantage of the proposed pattern syntax is that it is independent of a notion of support to assess the interestingness of a pattern. Instead, we showed how ideas introduced in \cite{tijl} can be used, defining interestingness by contrasting a pattern with a Maximum Entropy model representing background knowledge on the degree of individual nodes in the graph representation of the MRD.

This paper opens up several avenues for further research, such as: The expansion of the types of prior beliefs that can be taken into account in the MaxEnt model for the database; The definition of different pattern syntaxes corresponding to different graph patterns in the graph representation of the database; The development of speed-ups of the RMiner algorithm e.g. using diffsets and the so-called CHARM-properties presented in \cite{charm}; The development of an algorithm that directly mines the interesting MCCSs, instead of using a two-step approach as in this paper.

\section*{Acknowledgments}
We would like to thank Michael Mampaey for providing the Smurfig code and data and for his support in using Smurfig, as well as Jilles Vreeken for the thorough proof reading and the
insightful feedback. The authors are supported by EPSRC grant EP/G056447/1.

\bibliographystyle{abbrv}
\bibliography{bib}  
\end{document}